\documentclass[conference]{IEEEtran}
\IEEEoverridecommandlockouts 

\usepackage{cite}
\usepackage{amsmath,amssymb,amsfonts}
\usepackage{algorithmic}
\usepackage{graphicx}
\usepackage{textcomp}
\usepackage{xcolor}
\usepackage{placeins}  
\usepackage{afterpage}
\usepackage{xurl}

\usepackage[normalem]{ulem}
\usepackage{booktabs}
\usepackage[ruled,vlined]{algorithm2e}
\def\BibTeX{{\rm B\kern-.05em{\sc i\kern-.025em b}\kern-.08em
    T\kern-.1667em\lower.7ex\hbox{E}\kern-.125emX}}

\usepackage{tikz}
\usetikzlibrary{
    automata,
    shapes,
    arrows,
    positioning
}\usepackage{amsfonts, amsmath, amssymb, amsthm}

\usepackage[dvipsnames]{xcolor}

\newtheorem{claim}{Claim}

\begin{document}

\bibliographystyle{IEEEtran}

\title{Tools for Reducing Service Time in Near-Term Quantum Networks
{\footnotesize }
\thanks{This work was funded by the European Union's Horizon Europe research and innovation programme under grant agreement No. 101102140 – QIA Phase 1. The authors also acknowledge funding from NWO VICI.}
}
\author{
\IEEEauthorblockN{
Jake Smith\textsuperscript{1,2,3},
Thomas R. Beauchamp\textsuperscript{1,2,3},
Scarlett Gauthier\textsuperscript{1,2,3},
Oumayma Bouchmal\textsuperscript{1,2,3},
and Stephanie Wehner\textsuperscript{1,2,3}
}
\IEEEauthorblockA{
\textsuperscript{1}QuTech, Delft University of Technology \\
\textsuperscript{2}Kavli Institute of Nanoscience, Delft University of Technology \\
\textsuperscript{3}Quantum Computer Science, Electrical Engineering, Mathematics and Computer Science, Delft University of Technology
\\
\textsuperscript{} jsssmith@tudelft.nl, t.r.beauchamp@tudelft.nl, s.s.gauthier@tudelft.nl, o.bouchmal@tudelft.nl, s.d.c.wehner@tudelft.nl
}
}

\maketitle
\begin{abstract}
Architectures have been proposed to control entanglement generation in multi-user quantum networks. To allow time for local operations and classical communication at end nodes, these architectures insert fixed separations between consecutive batches of entanglement generation attempts. This reduces network utilization when attempts fail, leaving the network idle during the scheduled separation. To address this limitation, we propose a novel method to reclaim this idle time by shortening the scheduled separation between attempts while respecting hardware constraints. The method uses an analytical execution model to optimize the separation and reduce the total network service time of an application. Evaluations within the Arqon architecture show network service time reductions of up to $42$~minutes ($7.6\%$) for single applications and $16$--$29$~minutes ($26$--$30\%$) per application when co-scheduled. The approach applies broadly to quantum network architectures that share hardware between entanglement generation and local operations, and the method can be used online by network schedulers.
\end{abstract}

\begin{IEEEkeywords}
Quantum Internet, entanglement scheduling, minimum separation.
\end{IEEEkeywords}

\section{Introduction}
Quantum networks generate shared entanglement between end nodes, enabling the execution of applications such as secure remote computation~\cite{broadbent2009universal}, improved clock synchronization~\cite{giovannetti2001quantum}, precision sensing~\cite{zaiser2016enhancing}, and others~\cite{wehner2018quantum, bova2021commercial}. Application execution at end nodes is organized into an \emph{application session}, defined by a target number of application instances to be completed. A session describes the application’s demand for entanglement over time, requiring the network to repeatedly generate entanglement in order to serve successive instances.

To support the execution of each application instance, the network generates end‑to‑end entanglement as \emph{entangled links}. The network computes a schedule that coordinates entanglement generation attempts along a path, organizing them into scheduled blocks of time during which attempts are retried until the required entanglement is produced or the block ends. The \emph{network service time} is the total time scheduled by the network for a session.

The end nodes at either end of the path are not under network control and must actively participate for an entangled link to be generated. Two distinct roles arise within a scheduled block. The network \emph{delivers} an entangled link when it successfully produces one spanning the interior of the path. The end nodes then \emph{consume} a delivered link by aligning their own generation attempts with the network’s schedule. After consumption, the end nodes perform local operations and classical
communication (LOCC), such as quantum gates,
to complete the execution of an application instance.

In demonstrated quantum networks, end node hardware participating in consumption also performs LOCC (e.g.~\cite{liu2026long, stolk2024metropolitan, pompili2022experimental, two_level_control, islam2025experimental, kapoor2025public}). 
As a result, consecutive consumptions must be separated by at least the time required for LOCC, an interval we refer to as the \emph{minimum separation}.
The network scheduler must therefore space consecutive deliveries by at least this interval, making schedule computation a constrained scheduling problem.

The impact of the minimum separation constraint depends on how the
minimum separation compares to the expected interval between
successive consumptions. On current hardware, these two timescales are
comparable. Demonstrated heralded entanglement generation rates are
$2.22~\mathrm{Hz}$ in trapped-ion systems~\cite{liu2026long} and
$0.022~\mathrm{Hz}$ in solid-state platforms~\cite{stolk2024metropolitan},
both over $10~\mathrm{km}$ links. Minimum separation times arise from
device-level operations such as quantum gates, cooling, and
readout~\cite{krutyanskiy2023telecom, canteri2024photon, hucul2015modular},
as well as from control-stack overhead including classical processing
and communication~\cite{delle2025operating, van2025qoala}. These
contributions span microseconds to seconds. At the trapped-ion rate,
successive consumptions occur every $450~\mathrm{ms}$ on average,
comparable to a representative minimum separation of approximately
$500~\mathrm{ms}$, making the separation sufficiently long to
constrain every consumption.

A conservative scheduling approach assumes that end nodes always require
the full minimum separation between consecutively scheduled blocks. Because
each block succeeds only probabilistically, many blocks fail to produce
an end-to-end entangled link. Nevertheless, the minimum separation is
scheduled after every block, so following a failure the end nodes remain
idle for the full interval despite having no LOCC to perform.
This behavior over-provisions the network service time and prolongs session execution time.

To reduce this idle time, it is natural to consider shortening the
scheduled separation below the minimum. Doing so, however, risks
scheduling a block while the end nodes are still performing LOCC. In this case, the block must be skipped or the current
application instance preempted, discarding in-progress computation.
Shorter separations therefore reduce network service time but increase
the probability that blocks are skipped. Enough skips will prevent the session from completing its target number of
instances. This tradeoff admits an optimal scheduled separation that
minimizes session execution time while still completing the target
instances. 

In this work, we address this challenge by developing analytical tools
that compute the optimal scheduled separation for a given application
session. Because the minimum separation constraint arises whenever the
end node hardware that participates in consumption also performs LOCC,
the proposed tools apply broadly to entanglement delivery scheduling and
are not tied to a specific network architecture. We evaluate the tools
within the Arqon~\cite{arqon_arxiv} suite of network control applications,
which accepts entanglement demands from end nodes and computes schedules
that respect the minimum separation constraint. Arqon commits to
delivering the entangled links required by a session with at least a
specified service probability, providing a concrete architecture in which
to quantify the benefits of an optimized scheduled separation. The
contributions of this work are as follows:
\begin{itemize}
    \item \textbf{Analytical optimization of scheduled separation}: We
develop a novel method to align the network's delivery schedule with
end node consumption by modeling session execution under a reduced
scheduled separation as a Markov chain over executed and skipped blocks. 
From this we
formalize a constrained optimization to determine the optimal
scheduled separation and derive a closed-form lower bound on the
expected execution speedup. The resulting tools are lightweight and
can be used by network schedulers at runtime without repeated
simulation.
    \item \textbf{Evaluation under realistic hardware parameters}: We
evaluate the proposed tools within the Arqon
architecture~\cite{arqon_arxiv} using state-of-the-art and projected
trapped-ion entanglement generation parameters. In the state-of-the-art regime, the network service time is reduced by
up to $42$ minutes, a $7.6\%$ reduction corresponding to a single-session
execution speedup of $6.1\%$. In the projected regime, the network service time is reduced by
up to $2.8$ hours, a $6.5\%$ reduction and a speedup of $5.1\%$. We identify the conditions
required for positive speedup and validate the analytical predictions
using Monte Carlo simulation, confirming that the requested service
probabilities are maintained across all tested configurations.
\item \textbf{Extension to multi-session scheduling}: We extend the analysis to multiple sessions co-scheduled under
round-robin scheduling, showing that this configuration provides
additional separation that mitigates or eliminates skip penalties. For two co-scheduled sessions, the network service time is reduced by
$16$--$29$~minutes per session, a $26$--$30\%$ reduction corresponding
to execution speedups of $25$--$27\%$, substantially exceeding the
single-session case while maintaining the required service probability.
\end{itemize}

\section{Related Work}
The minimum separation constraint poses an explicit scheduling
problem only when three conditions hold together. Entanglement must
be generated when requested rather than buffered in advance; each end
node must manage its own execution, as in quantum internet (QI)
applications; and the end node must share hardware between consumption
and LOCC, since dedicated hardware for each would let local operations
proceed without blocking the next consumption.

Demonstrated quantum networks operate in the generate-when-requested
(GWR) paradigm, generating entanglement in response to application
demand rather than in advance~\cite{liu2026long,
stolk2024metropolitan, pompili2022experimental, two_level_control,
islam2025experimental, kapoor2025public}. A pre-loaded network
instead generates and buffers entanglement independently of requests,
and applications draw from the buffer when ready to
consume~\cite{gu2023esdi, pirker2019quantum, pouryousef2023quantum}
Each consumption then proceeds when ready and respects the separation
by construction. Such architectures, however, require generation
rates and coherence times beyond current
capabilities~\cite{beauchamp2025modular} and have not been
demonstrated.

The constraint also depends on which entity schedules end node
execution. In distributed quantum computing, a central orchestrator
schedules entanglement deliveries, consumption, and local operations
jointly~\cite{two_level_control, islam2025experimental,
kapoor2025public}, again satisfying the constraint by construction. QI
applications, by contrast, partition the application into node-specific
programs scheduled independently by each end node. The network
schedule coordinates delivery but cannot guarantee the constraint is
respected. Even so, most scheduling work for GWR networks serving QI
applications terminates the model at the generation of an end-to-end
link, abstracting delivery and consumption into a single
event~\cite{shi2020concurrent, zhao2021redundant, pant2019routing,
li2021effective, cicconetti2021request, dai2020optimal, gu2023esdi,
vasantam2021stability}, so the local operations that follow
consumption on shared hardware never enter the optimization.

The constraint therefore has not been observed as a scheduling problem
in any demonstration. The networks that both generate on demand and
share end node hardware either orchestrate execution centrally, or
have not yet performed QI sessions with repeated application instances
in which it manifests. This is the regime in which near-term quantum
networks will operate and QI applications will be programmed, and
recent work has begun to develop end node and network architectures to
support it.

QNodeOS, implemented by Delle Donne~\textit{et~al.}~\cite{
delle2025operating}, is the first end node runtime enabling
platform-independent execution of QI applications written in high-level
software. To provide a compatible network architecture, Beauchamp~\textit{et~al.}~\cite{beauchamp2025modular} propose a centrally controlled architecture organized around the \emph{entanglement packet}: a specification of the entangled links required to execute a single application instance, generated within a scheduled interval referred to as a packet generation attempt (PGA). To respect the minimum separation constraint, consecutive PGAs are conservatively separated by the full minimum separation, assuming the worst case in which the packet is generated at the very end of the PGA. This is notably the first architecture to consider how application execution requirements must inform network scheduling.

While the network architecture requires that successive packets be separated by some minimum separation, it does not prescribe how packet deliveries are scheduled or with what reliability. To address this, Arqon~\cite{arqon_arxiv} builds on this architecture as the first suite of network control applications designed to reliably fulfill accepted demands by periodically computing a network schedule of PGAs. For each accepted demand, a service agreement specifies the number of packets required to complete all instances, and Arqon schedules sufficient PGAs to meet this requirement with the service probability before the expiry time. For each scheduled PGA, Arqon reserves the appropriate network resources on an end-to-end path between the two end nodes executing the application.

The Arqon authors further provide a dedicated simulation framework, \texttt{arqon-sim}~\cite{arqon_sim}, which simulates schedule execution at the level of individual PGAs: a PGA either succeeds or fails with the packet success probability, and the time at which the packet is generated within a PGA is not modeled. Investigating scheduled separation reduction requires this finer resolution, since minimum separation violations occur only when a packet is generated late enough that LOCC operations extend into the subsequent PGA. We therefore develop a separate, dedicated simulation package that resolves packet generation time within each PGA.

\section{Preliminaries}
\subsection{Network Model}

To optimize the scheduled separation chosen by the network scheduler
between consecutive deliveries, we
consider a GWR network supporting QI applications. Each end node
participates in entanglement generation and performs LOCC using the same hardware. End-to-end entangled links are
generated at rate~$\lambda$ with some minimum fidelity, abstracting the
network's interior delivery, the end nodes' consumption, and any hardware calibration latencies into a single
rate parameter. An application \texttt{App} is executed between a pair
of networked end nodes $\mathcal{N}=(N_1, N_2)$ and requires
$N^{\text{inst}}$ instances to complete, subject to a deadline
$t^{\text{expiry}}$. These parameters define an application
session~\cite{beauchamp2025modular},
\begin{equation}
    \mathcal{S} = (\texttt{session\_id}, \mathcal{N}, \texttt{App},
    N^{\text{inst}}, t^{\text{expiry}}).
\end{equation}
Each session submits a demand to the network specifying the identical
packet required for each instance. A packet is generated when the
required number of entangled links have been delivered and consumed
within a time window~$w$, where $w$ is constrained by the coherence time
of the end node's quantum memory. The network schedules PGAs for each
demand, and end nodes participate in all scheduled PGAs.



Following~\cite{arqon_arxiv}, we assume each instance's LOCC begin only after a packet is generated and that end nodes do not interrupt an executing instance to participate in entanglement generation. Successive packet generations not separated by at least the duration of LOCC violate the minimum separation constraint, and we assume the end nodes skip the subsequent PGA. In practice, LOCC also account for hardware calibration (e.g.\ ion cooling~\cite{liu2026long} or charge-resonance checks in color-center platforms~\cite{stolk2024metropolitan}) and software stack overheads such as processing~\cite{delle2025operating} and communication latencies~\cite{pompili2022experimental}.

\begin{figure}[t]
  \noindent\makebox[\columnwidth][l]{%
    \includegraphics[width=\columnwidth]{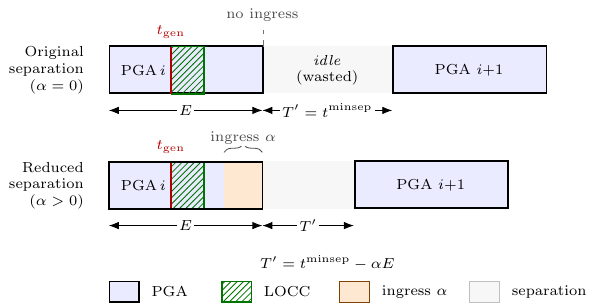}%
  }
  \caption{The ingress $\alpha$ is the fraction of PGA $i$ by which the
  scheduled separation is shortened, giving $T'=t^{\text{minsep}}-\alpha E$.
  At $\alpha=0$ (top) the full minimum separation is kept and the network
  idles after an early $t_{\text{gen}}$. At $\alpha>0$ (bottom) the
  separation shortens by $\alpha E$, reclaiming idle time. A packet
  generated within the shaded ingress region risks a minimum separation
  violation (Section~\ref{sec:relax_minsep}).\vspace{-8pt}
}
  \label{fig:minsep}
\end{figure}

\subsection{Performance Metrics} \label{sec:perf_metrics}
To quantify the effects of scheduling improvements on network resource
provisioning and end node session execution, we define three performance
metrics: the reduction in network service time~$R$, the absolute service
time savings $\Delta T_{\text{serv}}$, and the session execution
speedup~$S$. We first introduce the following quantities. For
session~$\mathcal{S}$, let $E$ denote the duration of a PGA,
$t^{\text{minsep}}$ the minimum time an end node requires for LOCC between successive consumptions, and $T' \leq t^{\text{minsep}}$
the scheduled separation between consecutive PGAs. Each instance requires
exactly one packet, and the network allocates a \emph{release time} of
$E + T'$ per instance, following the terminology of Arqon~\cite{arqon_arxiv}. Let $\mathcal{N}^{\text{scheduled}}$ denote the
total number of release times allocated by the network for a session.
Let $t_{\text{gen}}$ denote the time at which a packet is generated
within a successful PGA, and let $\tau$ denote the dimensionless release-time index at which the final
required packet is generated, counting both executed and skipped
PGAs.

The network service time $T_{\text{serv}} = \mathcal{N}^{\text{scheduled}}
\cdot (E + T')$ is the total duration over which network resources are
scheduled for a single session. The reduction in service time at a
reduced scheduled separation ($T'_1 < t^{\text{minsep}}$) relative to
the baseline at the full minimum separation ($T'_2 = t^{\text{minsep}}$)
is
\begin{equation} \label{eq:red_sched}
    R = 1 - \frac{T_{\text{serv},1}}{T_{\text{serv},2}},
\end{equation}
with corresponding absolute saving ${\Delta T_{\text{serv}} = R \cdot
T_{\text{serv},2}}$.

The service time upper bounds the session execution time,
\begin{equation}\label{eq:T-session}
    T_{\text{session}} = (\tau - 1) \cdot (E + T') + t_{\text{gen}},
\end{equation}
since not all allocated release times may be needed to generate the
required number of packets. When ${\tau = 1}$, the first PGA produces the
final required packet and ${T_{\text{session}} = t_{\text{gen}} \leq E+T'}$.
In the ideal case where every scheduled PGA succeeds, ${\tau =
N^{\text{inst}}}$ and the session completes at ${(N^{\text{inst}} - 1)
\cdot (E + T') + t_{\text{gen}}}$. The session execution speedup compares
$T_{\text{session}}$ at the reduced and baseline separations:
\begin{equation}\label{eq:speedup-ratio-agnostic}
    S \;=\;
    1 - \frac{T_{\text{session},1}}{T_{\text{session},2}}.
\end{equation}

These metrics depend only on the scheduling structure of repeated
entanglement deliveries and apply to any network
architecture in which persistent demands require successive
generation attempts separated by a minimum interval.

\subsection{Architecture}

To develop and evaluate the methods for the constrained scheduling problem, we adopt the demand and scheduling formalism of Arqon~\cite{arqon_arxiv}, which defines end node demands and how their corresponding packet generation probabilities and PGA execution times are calculated. Each session $\mathcal{S}$ submits a demand $d$ for entanglement generation to the network,
\begin{equation}
    d = (\mathfrak{p};\; t^{\text{minsep}};\; t^{\text{expiry}};\; N^{\text{inst}}; \epsilon_d^{\text{service}})
\end{equation}
with packet $\mathfrak{p} = (w, s, F)$ requesting $s$ entangled links
of minimum fidelity $F$ to be generated within a window $w$, and
$\epsilon_d^{\text{service}}$ defining the service probability
$(1 - \epsilon_d^{\text{service}})$ with which the demand must be
fulfilled.
Arqon constructs periodic schedules over fixed-length scheduling intervals, and each scheduled PGA attempts to generate a packet for end node consumption. To respect the minimum separation constraint, Arqon separates consecutive PGAs scheduled for the same demand by a minimum interval of $t^{\text{minsep}}$. The PGA duration $E(p)$ is the time required to generate a packet with probability at least $p$, determined by the entanglement generation rate along the end-to-end route: larger $p$ requires more attempts and therefore a longer $E(p)$. The PGA success probability $p^{\text{packet}}$ is then determined by the optimization 
\begin{equation}\label{eq:p-packet-opt}
    p^{\text{packet}} = \operatorname*{argmin}_{p}\; \mathcal{N}^{\text{SI}}(p, \epsilon_d^{\text{service}})\Big(E(p) + t^{\text{minsep}}\Big),
\end{equation}
where $\mathcal{N}^{\text{SI}}(p, \epsilon_d^{\text{service}})$ is the number of PGAs scheduled per scheduling interval for the demand to generate $N^{\text{inst}}$ packets with probability at least $(1 - \epsilon_d^{\text{service}})$. If all scheduled PGAs execute, the expected number of packets generated is
\begin{equation} \label{eq:min_serv}
    \mathbb{E}[P] = p^{\text{packet}} \, \mathcal{N}^{\text{executed}} \geq N^{\text{inst}}.
\end{equation}
\emph{Minimum service} is achieved when the number of packets generated is at least $N^{\text{inst}}$, and the service agreement between the network and an accepted demand guarantees that minimum service is provided with probability at least $(1 - \epsilon_d^{\text{service}})$.

Scheduling improvements must continue to satisfy the service agreement while also reducing the execution time of each session. The definitions of $T_{\text{serv}}$, $T_{\text{session}}$, and session speedup $S$ from Section~\ref{sec:perf_metrics} apply directly with $\mathcal{N}^{\text{scheduled}} = \mathcal{N}^{\text{SI}}$. Because Arqon allocates the same number of PGAs for a given demand regardless of the scheduled separation, $\mathcal{N}^{\text{SI}}$ is identical for both the original and reduced schedules. Expanding $T_{\text{serv}} = \mathcal{N}^{\text{SI}} \cdot (E + T')$ in Eq.~\eqref{eq:red_sched} shows the ratio $\mathcal{N}^{\text{SI}}_1 / \mathcal{N}^{\text{SI}}_2$ cancels, and the service time reduction and session speedup follow from Eqs.~\eqref{eq:red_sched} and ~\eqref{eq:speedup-ratio-agnostic}.

\subsection{Problem Statement}
Arqon satisfies the minimum separation constraint by scheduling consecutive PGAs with a separation of $t^{\text{minsep}}$. This assumes the worst-case generation time of $E$. In practice, packets are frequently generated before $E$, and the required separation depends on the generation time within the PGA. Arqon's conservative treatment of this constraint therefore overestimates the required scheduled separation, prolonging session execution time and reserving network resources that could otherwise be allocated to nodes ready to use them. Reducing the scheduled separation below $t^{\text{minsep}}$ shortens the release time and recovers this excess capacity. However, a packet generated sufficiently late in the PGA may violate the minimum separation constraint. This work addresses this tradeoff by computing the scheduled separation
that minimizes the expected session execution time, subject to the
expected packet count reaching $N^{\text{inst}}$ within the allocated
$\mathcal{N}^{\text{SI}}$ PGAs. 
The probabilistic minimum service guarantee
is not formally enforced by the optimization and is instead verified
empirically by simulation across all tested configurations.

\section{Methods} \label{sec:methods}
The following methods develop the analytical tools for computing the
optimal scheduled separation and are publicly available in~\cite{minsep_relaxation_2026}.

\subsection{Simulating Packet Generation}
We employ a Monte Carlo approach \cite{harrison2010introduction} to evaluate whether a packet is generated within a sliding window $w$ for a given demand $d$. 
The simulation, described by Algorithm~\ref{alg:sliding_window},
samples end-to-end entangled link generation as a Poisson process with
rate~$\lambda$ over the PGA duration~$E$. At each new link generation, the current sliding window is set to the earliest generation still within $w$ of the latest, tracked by the pointer $\text{ptr}_{\text{first}}$ into the arrivals list. If $s$ links accumulate within the window, the packet is generated at time $t_{\text{gen}}$ marking the final accumulated link. Otherwise, $E$ expires before $s$ links accumulate and the PGA fails to generate a packet.

We compare two parameter regimes of $\lambda$ and $w$ following recent demonstrations of entanglement distribution between trapped-ion nodes at metropolitan-scale distances. 
Liu~\textit{et al.}~\cite{liu2026long} represents the state-of-the-art (SOTA) in rate and coherence time, achieving ($\lambda = 2.22~\mathrm{Hz}, \ t_{\text{coh}} = 550~\mathrm{ms}$). We also compare this to an optimistic ion-trap regime of ($\lambda = 10~\mathrm{Hz}, \ t_{\text{coh}} = 4~\mathrm{s}$) \cite{arqon_arxiv}. The window $w$ is set to $t_{\text{coh}} / 2$, reserving the remaining coherence time for local quantum operations. The minimum separation is set to $t^{\text{minsep}} = t_{\text{coh}} / 2 + t_{\text{LOCC}}$, where $t_{\text{LOCC}} = 0.25~\mathrm{s}$ accounts for local operations and classical communication between nodes~\cite{delle2025operating, van2025qoala}, and trapped ion cooling latencies~\cite{ hucul2015modular}. Qubit decoherence after delivery is not modeled, as we assume the window $w$ is chosen to ensure that delivered qubits remain coherent throughout instance execution.

\begin{algorithm}[] \label{alg:sliding_window}
\caption{Packet generation within sliding window}
\KwIn{$\lambda$, $t_{\text{start}}$, $E$, $w$, $s$}
\KwOut{Success flag and $t_{\text{gen}}$, the time $s$ links accumulate within $w$}
Initialize $t \leftarrow t_{\text{start}}$, \ $\text{ptr}_{\text{first}} \leftarrow \texttt{nil}$, \ $\text{arrivals} \leftarrow [\,]$\;
\While{$t < E$}{
    $t \leftarrow t + \text{Exp}(1/\lambda)$\;
    \If{$t \geq E$}{
        \textbf{break}\;
    }
    Append $t$ to $\text{arrivals}$\;
    \For{each $t_i \in \text{\emph{arrivals}}$}{
        \If{$t - t_i < w$}{
            $\text{ptr}_{\text{first}} \leftarrow \text{index of } t_i$\;
            \textbf{break}\;
        }
    }
    \If{$|\text{\emph{arrivals}}| - \text{\emph{ptr}}_{\text{\emph{first}}} \geq s$}{
        $t_{\text{gen}} \leftarrow t$\;
        \Return{\texttt{\emph{true}}, $t_{\text{\emph{gen}}}$}\;
    }
}
\Return{\texttt{\emph{false}}, \texttt{\emph{nil}}}
\end{algorithm}

\subsection{Relaxing a Demand's Scheduled Separation} \label{sec:relax_minsep}
We consider reducing the scheduled separation between consecutive PGAs from the demand's minimum separation $t^{\text{minsep}}$ to $T' \leq t^{\text{minsep}}$, described by Figure \ref{fig:minsep}. A \emph{minimum separation violation} occurs when the interval between packet generation in PGA~$i$ and the start of PGA~$i{+}1$ is less than the time required for LOCC $t^{\text{minsep}}$, formally ${T' + (E - t_{\text{gen}}) < t^{\text{minsep}}}$. We restrict our analysis to the \emph{single-skip regime}, in which at most one consecutive PGA is ever skipped. In the worst case, a packet generated at time $E$ of PGA~$i$ triggers a skip of PGA~$i{+}1$, and the time from that packet to the start of PGA~$i{+}2$ is $2T' + E$. Reducing $T'$ further would leave insufficient time for PGA~$i{+}2$ and trigger cascading skips. The single-skip guarantee therefore requires
\begin{equation}\label{eq:single-skip}
    2T' + E \geq t^{\text{minsep}},
\end{equation}
yielding a lower bound on the reduced separation of ${(t^{\text{minsep}} - E)/2 \leq T'}$. Since $T'$ is a scheduled time interval and therefore non-negative, the admissible range is
\begin{equation}\label{eq:admissible-range}
    \max\!\left(0,\; \frac{t^{\text{minsep}} - E}{2}\right) \leq T' \leq t^{\text{minsep}}.
\end{equation}
At the upper bound, the original schedule is recovered. At the lower bound, the single-skip constraint~\eqref{eq:single-skip} ensures the worst-case violation skips only PGA~$i{+}1$. Further reduction would permit cascading skips.

We parameterize the reduction of the scheduled separation by the PGA
ingress $\alpha = (t^{\text{minsep}} - T')/E \in [0,1]$. A packet
generated within the \emph{violation zone} $((1-\alpha)E, E]$ causes a
minimum separation violation. The conditional probability of this event
given successful packet generation is
\begin{equation}\label{eq:pmv}
    p^{\text{mv}}(\alpha) = \mathbb{P}[t_{\text{gen}} \in ((1-\alpha)E,
    E] \;\mid\; t_{\text{gen}} \leq E].
\end{equation}
Under our earlier assumption that a minimum separation violation causes
the end node to skip the subsequent PGA, $p^{\text{mv}}(\alpha)$ is also
the probability that an end node skips the next PGA. The distribution of
packet generation times within a PGA admits no closed-form expression,
so $p^{\text{mv}}(\alpha)$ cannot be evaluated analytically. The Markov
chain session model in the following subsection requires $p^{\text{mv}}$
at many values of $\alpha$ during the constrained optimization, which
would be prohibitively expensive to evaluate by fresh Monte Carlo
simulation, particularly at runtime by a network scheduler. To avoid
this cost, we precompute $p^{\text{mv}}$ into a lookup table by sweeping
$p^{\text{packet}}$ and~$\alpha$ with the Monte Carlo simulator of
Algorithm~\ref{alg:sliding_window}. Each table is computed once per
physical configuration $(\lambda, w, s)$ and reused across all demands
sharing those parameters. For each configuration, we collect $N=10^4$
successful trials via rejection sampling, and $p^{\text{mv}}$ is the
fraction in which the packet is generated within the violation zone.
Once the table is available, the Markov chain model and the optimization
over~$\alpha$ reduce to table lookups and closed-form expressions,
enabling lightweight runtime evaluation within the network scheduler.

\subsection{Modeling Session Performance with Reduced Scheduled Separation} \label{sec:sess_perf_methods}

To evaluate session performance under reduced scheduled separation, we
simulate sessions by executing $\mathcal{N}^{\text{SI}}$ scheduled
PGAs subject to the skip behavior introduced in
Section~\ref{sec:relax_minsep}, in which a minimum separation violation
causes the end node to skip the subsequent PGA. Although demands are generally long-lived and
served across multiple scheduling intervals until $t^{\text{expiry}}$, our
analysis focuses on a single session, which isolates the tradeoff between
reduced separation and skip penalties. For each executed PGA, sampled link generations determine whether a packet is generated and whether $t_{\text{gen}}$ falls in the violation zone $((1-\alpha)E,\, E]$. The session terminates when $N^{\text{inst}}$ packets have been generated or when all $\mathcal{N}^{\text{SI}}$ PGAs are exhausted. For completed sessions, $T_{\text{session}}$ takes the form of Eq.~\eqref{eq:T-session}, with $\tau$ the release time in which the final packet is generated. Sessions that do not generate $N^{\text{inst}}$ packets within the allocation are assigned ${T_{\text{session}} = \mathcal{N}^{\text{SI}} \cdot (E + T'(\alpha))}$. We estimate the expected session times as sample means over all trials, both completed and incomplete. The average simulated speedup at a given ingress is
\begin{equation}\label{eq:speedup-sim}
    \overline{S}(\alpha)
    \;=\;
    1 - \frac{\mathbb{E}\bigl[(\tau(\alpha){-}1)(E + T'(\alpha))
              + t_{\text{gen}}(\alpha)\bigr]}
             {\mathbb{E}\bigl[(\tau(0){-}1)(E + t^{\text{minsep}})
              + t_{\text{gen}}(0)\bigr]}.
\end{equation}

We define $\overline{S}(\alpha)$ as one minus the ratio of expected
session execution times, not the expectation of the per-session speedup of
Eq.~\eqref{eq:speedup-ratio-agnostic}. We
report the former to match the mean-based quantities predicted by
the Markov chain model developed below. To predict session performance without repeated simulation, we introduce a Markov chain for the expected number of PGAs executed out of the total scheduled. We introduce several quantities to define the state space. Let $N_k$ be the cumulative number
of executed PGAs after $k$ scheduled steps. Each scheduled step is labeled
as either executed ($\overline{G}$) or skipped due to a minimum separation
violation at the preceding step ($G$), and $S_k \in \{\overline{G}, G\}$
denotes this label at step $k$.

The session is then modeled as the Markov chain ${M_k = (N_k, S_k)}$ on the
state space $\mathbb{N} \times \{\overline{G}, G\} \cup \{(0, \overline{G})\}$. The
transition structure reflects the skip rule directly. In state
$\overline{G}$, the scheduled PGA is executed, and it triggers a skip of
the next step only if both hold: a packet is generated,
which occurs with probability $p^{\text{packet}}$, and its generation time
falls in the violation zone, which occurs with
conditional probability $p^{\text{mv}}$. The transition
$\overline{G} \to G$ therefore occurs with unconditional probability
$p_r = p^{\text{packet}} \cdot p^{\text{mv}}$. Oherwise the state remains
in $\overline{G}$. From $G$, the state returns deterministically to $\overline{G}$, since a skipped step triggers no minimum separation violation, and the single-skip regime of Eq.~\eqref{eq:single-skip} prevents any repeated ones.

The sub-chain $S_k$ forms a reduced Markov chain on the state space
$\{\overline{G}, G\}$, which describes the transitions between executed
and skipped steps and can be used to compute the $k$-step transition
probabilities. Its transition matrix is
\begin{equation} \label{eq:transition}
    P_S = \begin{bmatrix} 1 - p_r & p_r \\ 1 & 0 \end{bmatrix},
\end{equation}
where rows and columns are ordered $(\overline{G},\, G)$, from which the $k$-step transition probabilities follow by induction. In particular, the probability of being in the skip state~$G$ at step~$k$, starting from~$\overline{G}$, is 
\begin{equation} \label{eq:fk}
    f_k(p_r) = p_r \sum_{i=0}^{k-1} (-p_r)^i.
\end{equation}
 Conditioning $N_k$ on $S_{k-1}$ yields the recurrence
\begin{equation} 
    \mathbb{E}[N_k] = \mathbb{E}[N_{k-1}] + 1 - f_{k-1}(p_r),
\end{equation}
where $f_k$ is as in \eqref{eq:fk}. This resolves to the closed form
\begin{equation}
    \mathbb{E}[N_k] = \sum_{i=0}^{k-1} (k - i)(-p_r)^i.
\end{equation}
The expected number of packets generated after $k$ scheduled steps is then $\mathbb{E}[P_k] = p^{\text{packet}} \cdot \mathbb{E}[N_k]$, since each executed PGA independently produces a packet with probability~$p^{\text{packet}}$. We define $k^*$ as the smallest~$k$ such that $\mathbb{E}[P_k] \geq N^{\text{inst}}$. An outline of the full proof is provided in Appendix \ref{app:markov}. 

In any individual session, however, variance in both the number of executed PGAs and the per-PGA packet outcomes causes~$\tau$, the release time the session completes, to differ from~$k^*$. Because $P_k$ has positive variance around its mean, a
fraction of simulated sessions complete before step~$k^*$, such that
$\mathbb{E}[\tau] \leq k^*$. The quantity
${k^*(\alpha) \cdot (E + T'(\alpha))}$ therefore provides a conservative
upper bound on~${\mathbb{E}[T_{\text{session}}(\alpha)]}$. The minimum session speedup is then
\begin{equation}\label{eq:speedup-ratio}
    \tilde{S}(\alpha^*, k^*) \;=\;
    1 - \frac{k^*(\alpha^*)}{k^*(0)}
    \cdot
    \frac{E + T'(\alpha^*)}{E + t^{\text{minsep}}},
\end{equation}
which lower bounds the average simulation speedup ${\overline{S}(\alpha^*) \geq \tilde{S}(\alpha^*, k^*)}$. 

Three constraints bound the admissible PGA ingress~$\alpha$ and define
the feasible region of the optimization. First, the scheduled
separation cannot be reduced below zero, since there is no remaining
interval to shorten. This \emph{geometric constraint} requires
${T'(\alpha) \geq 0}$, equivalently ${\alpha \leq t^{\text{minsep}}/E}$.
Second, the single-skip constraint~\eqref{eq:single-skip} requires
${T'(\alpha) \geq \max(0,\, (t^{\text{minsep}} - E)/2)}$, ensuring that
the worst-case violation skips at most one consecutive PGA. When
${E < t^{\text{minsep}}}$, this prevents~$T'$ from being reduced to
zero. Finally, the service feasibility constraint requires
$k^*(\alpha) \leq \mathcal{N}^{\text{SI}}$, ensuring the expected packet count reaches~$N^{\text{inst}}$ within the allocated PGAs. We
define~$\alpha^*$ as the ingress that minimizes the Markov session
time ${T_{\text{session}}(\alpha) = k^*(\alpha) \cdot (E + T'(\alpha))}$
over this feasible region.

We remark that $k^*$ enforces an expectation-based analog of the probabilistic service guarantee defined in Eq.~\eqref{eq:min_serv}. While minimum
service requires that $N^{\text{inst}}$ packets are generated with
probability at least $(1 - \epsilon_d^{\text{service}})$, the Markov chain
tracks only the expected packet count and does not bound the probability
that an individual session falls short. The service feasibility constraint
$k^*(\alpha) \leq \mathcal{N}^{\text{SI}}$ is therefore necessary but not
sufficient for the probabilistic guarantee, and we verify empirically with simulation that
minimum service is maintained at the optimal ingress across all tested
configurations.

While the speedup lower bound in Eq.~\eqref{eq:speedup-ratio} can be evaluated numerically from~$k^*$, it does not reveal the structure of the tradeoff between reduced release time and skip penalties.
To decompose the speedup into scheduling quantities, we derive a closed-form approximation of the ratio $k^*(\alpha^*)/k^*(0)$ using the stationary distribution of the reduced Markov chain.
The stationary distribution $\boldsymbol{\pi} = (\pi_{\overline{G}},\, \pi_{G})$ satisfies ${\boldsymbol{\pi} P_S = \boldsymbol{\pi}}$ and ${\pi_{\overline{G}} + \pi_{G} = 1}$, which yields
\begin{equation}\label{eq:stationary}
    \pi_{\overline{G}} = \frac{1}{1 + p_r}, \qquad
    \pi_{G} = \frac{p_r}{1 + p_r}.
\end{equation}
In steady state, a fraction $\pi_{\overline{G}}$ of scheduled PGAs result in an executed PGA. Since each executed PGA produces a packet with probability~$p^{\text{packet}}$ regardless of the skip dynamics, the number of executions required to generate~$N^{\text{inst}}$ packets is the same at any ingress. In the stationary limit, the number of scheduled steps at ingress~$\alpha^*$ therefore satisfies
\begin{equation}\label{eq:k-star-inflation}
    \frac{k^*(\alpha^*)}{1 + p_r(\alpha^*)} \;=\; k^*(0),
\end{equation}
and $k^*(\alpha^*) = k^*(0) \cdot (1 + p_r(\alpha^*))$. The relation is exact in the stationary limit and applies to finite sessions once they are long enough for the chain to converge, as is the case for all tested configurations with $N^{\text{inst}} = 10^3$ averaged over $10^3$ iterations. For shorter sessions, Eq.~\eqref{eq:k-star-inflation} should be replaced by numerical evaluation.

\begin{figure*}[tp]
    \centering
    \begin{minipage}[t]{0.48\textwidth}
        \centering
        \includegraphics[width=\textwidth]{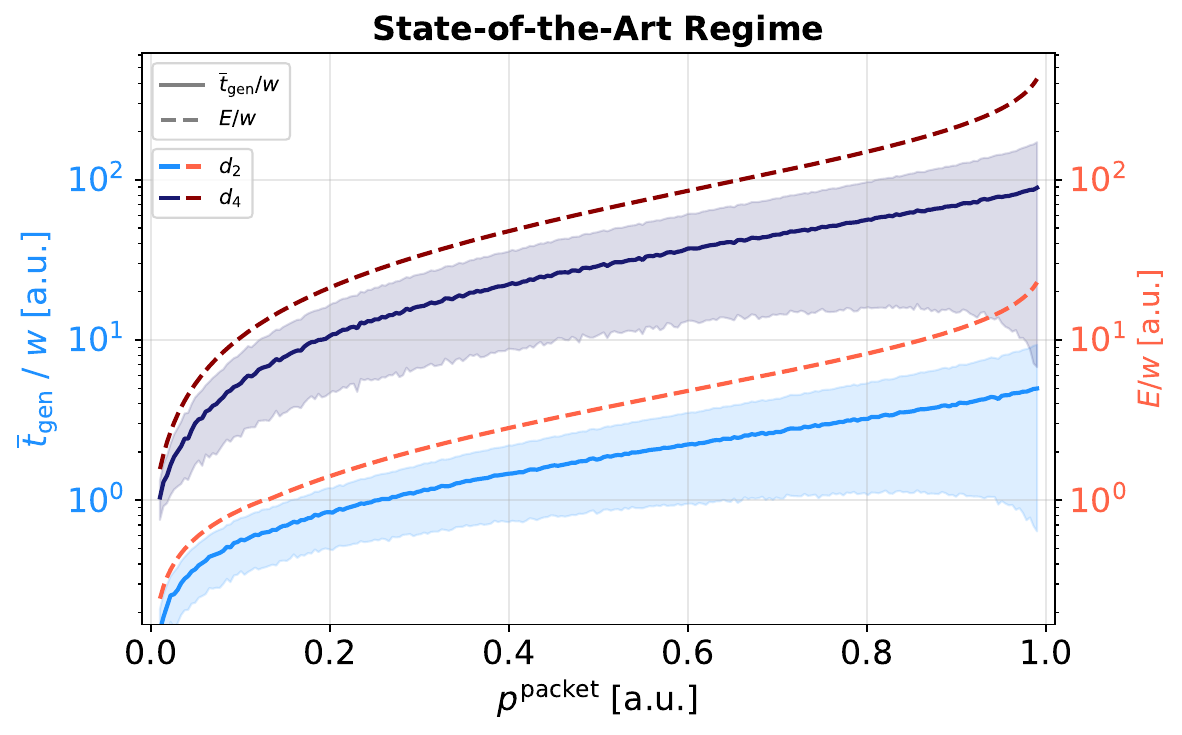}
    \end{minipage}
    \hfill
    \begin{minipage}[t]{0.48\textwidth}
        \centering
        \includegraphics[width=\textwidth]{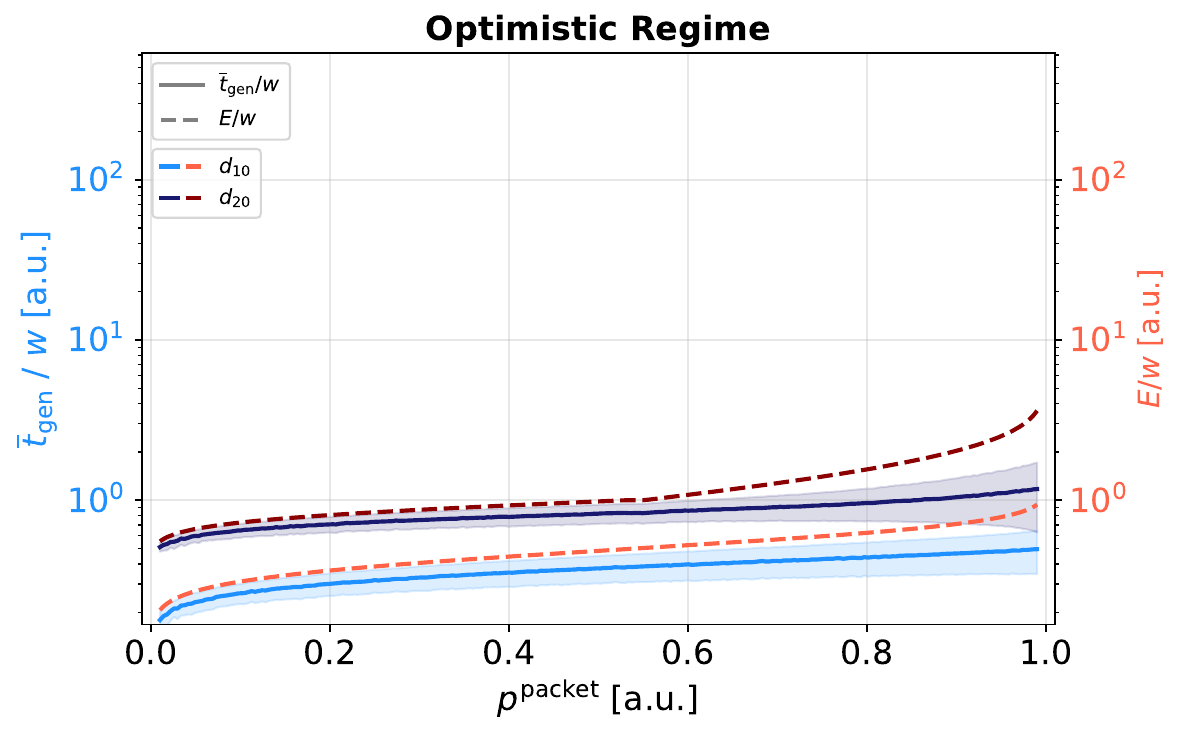}
    \end{minipage}
    \caption{Average packet generation time (solid, left axis) and PGA duration (dashed, right axis) as a function of packet generation probability
$p^{\text{packet}}$ for the state-of-the-art regime
(left; $\lambda = 2.22~\mathrm{Hz}$, $w = 275~\mathrm{ms}$) and
optimistic regime (right; $\lambda = 10~\mathrm{Hz}$, $w = 2~\mathrm{s}$).
Both quantities are normalized by the window length~$w$ to enable comparison
across demands with different PGA durations. Demands are labeled by their
requested link count~$s$, with $z$-scores
$z = (s - \lambda w)/\sqrt{\lambda w}$ of $1.8$ and $4.3$ for~$d_2$
and~$d_4$ (SOTA), and $-2.2$ and $0$ for~$d_{10}$ and~$d_{20}$
(optimistic). Shaded regions indicate $\pm$ one standard deviation
over $10^4$ trials per point. All demands use $N^{\text{inst}} = 1$ and
$\epsilon_d^{\text{service}} = 10^{-5}$.} \label{fig:p_packet_vs_t_gen}
\end{figure*}

Substituting Eq.~\eqref{eq:k-star-inflation} into Eq.~\eqref{eq:speedup-ratio} yields a closed-form speedup expressed only by the release time reduction $R(\alpha^*)$ and the unconditional skip probability $p_r(\alpha^*)$. Because $\mathcal{N}^{\text{SI}}$ is held fixed across scheduled separations, the service time reduction in Eq.~\eqref{eq:red_sched} simplifies to the release time reduction ${R(\alpha) = 1 - (E + T'(\alpha))/(E + t^{\text{minsep}})}$, and the stationary speedup becomes
\begin{align}\label{eq:speedup-decomp}
    \tilde{S}(\alpha^*)
    &\;=\;
    1 - \bigl(1 + p_r(\alpha^*)\bigr)
    \cdot \bigl(1 - R(\alpha^*)\bigr) \nonumber\\[4pt]
    &\;=\;
    \underbrace{R(\alpha^*)}_{\text{release time reduction}}
    \;-\;
    \underbrace{p_r(\alpha^*) \cdot
    \bigl(1 - R(\alpha^*)\bigr)}_{\text{skip penalty}}.
\end{align}
The first term, $R(\alpha^*)$, captures the fraction of session time savings if no PGAs were skipped. The second term captures the fraction of that savings lost to the additional PGAs executed to compensate for skipped PGAs. The speedup is positive only when the release time reduction exceeds the skip penalty. Evaluating $\tilde{S}(\alpha^*)$ requires no simulation, only the $p^{\text{mv}}(\alpha^*)$ lookup table and ${T'(\alpha^*) = t^{\text{minsep}} - \alpha^* E}$ computed from demand parameters. Determining the optimal $\alpha^*$ requires evaluating the service feasibility constraint $k^*(\alpha) \leq \mathcal{N}^{\text{SI}}$ through the Markov chain recurrence, since the decomposition alone cannot predict whether the session completes in time.

\section{Evaluation} \label{sec:evaluation}

We evaluate the proposed methods in three stages. We first characterize the distribution of packet generation times within a PGA and the resulting lookup tables of minimum separation violation probabilities that drive the optimization, then quantify the single-session speedup, and finally extend to round-robin scheduling of two co-scheduled sessions.

\subsection{Simulating Packet Generation Time Distributions} \label{sec:p_time}

To evaluate scheduled separation relaxation, the distribution of packet generation times within a PGA must first be understood, since packets arriving near the end of the PGA are more likely to cause minimum separation violations. Demands~$d_s$ are indexed by the requested packet's link count ~$s$.
Fig.~\ref{fig:p_packet_vs_t_gen} shows~$d_2$ and~$d_4$ in the SOTA
regime and~$d_{10}$ and~$d_{20}$ in the optimistic regime.

Fig.~\ref{fig:p_packet_vs_t_gen} plots $\overline{t}_{\text{gen}}/w$ against $p^{\text{packet}}$, with $E/w$ on the secondary axis; normalization by~$w$ enables comparison across demands. The $z$-score $z = (s - \lambda w) / \sqrt{\lambda w}$ indicates within how many standard deviations the requested link count lies from the expected number of arrivals per window. When $z \gg 0$, a single window is unlikely to generate enough links, so the PGA spans many overlapping windows ($E \gg w$). Generation times then spread across a substantial portion of~$E$, as seen for $d_2$ at $z = 1.8$ and $d_4$ at $z = 4.3$. When $z \leq 0$, a single window suffices ($E \leq w$) and packets are likely generated within roughly one window scan ($d_{10}$ at $z = -2.2$).

\subsection{Demand Lookup Tables}\label{subsec:lookup-tables}
\begin{figure*}[t]
    \centering
    \begin{minipage}[t]{0.48\textwidth}
        \centering
        \includegraphics[width=\textwidth]{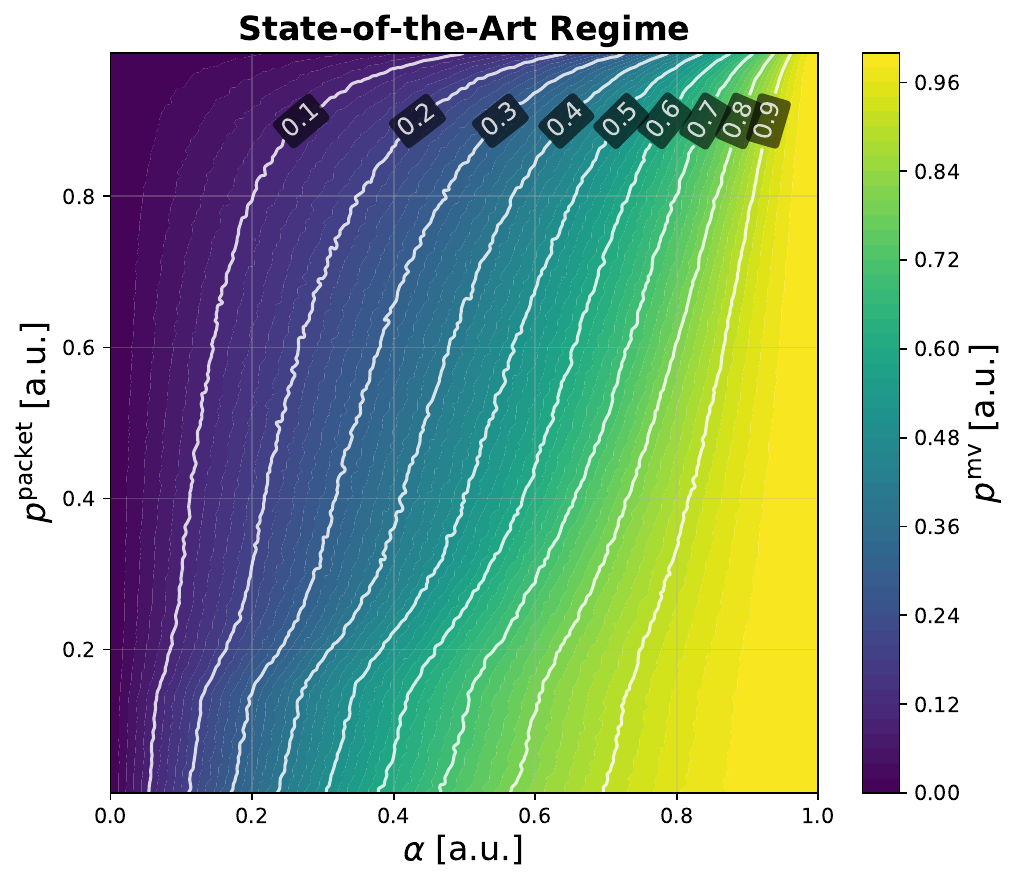}
    \end{minipage}
    \hfill
    \begin{minipage}[t]{0.48\textwidth}
        \centering
        \includegraphics[width=\textwidth]{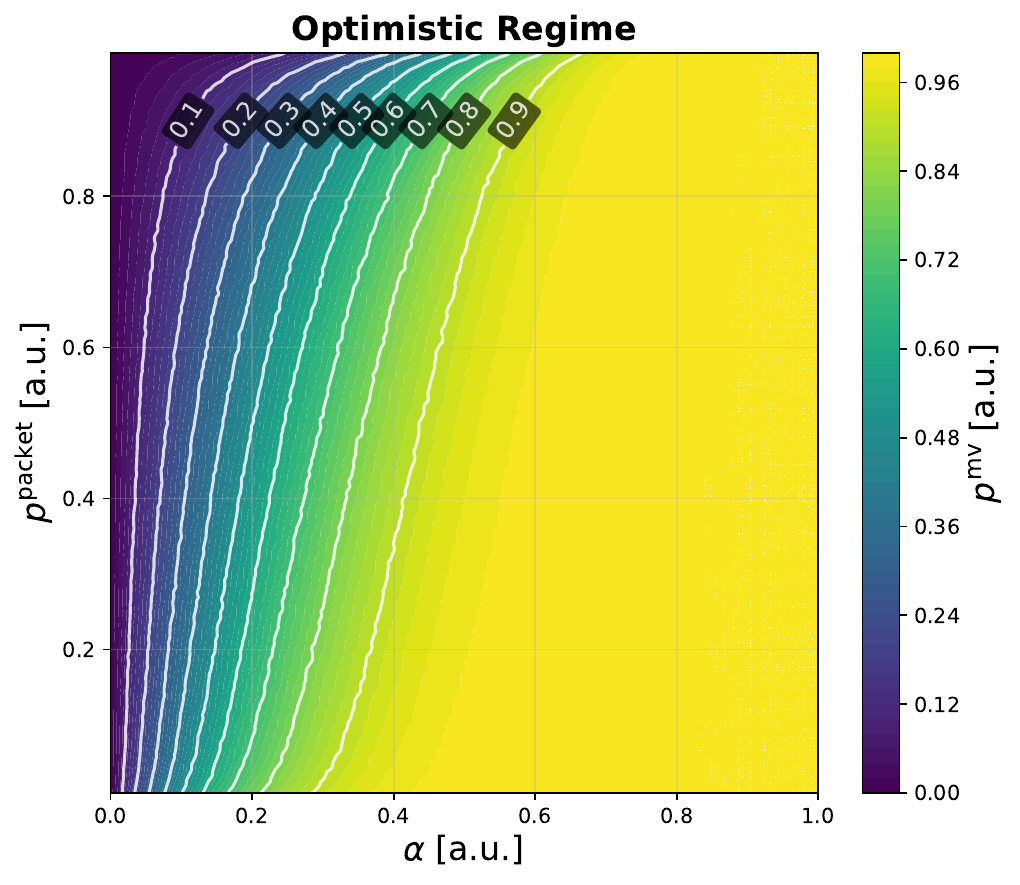}
    \end{minipage}
    \caption{Probability $p^{\text{mv}}$ of a minimum separation violation,
conditioned on successful packet generation. The surfaces are shown as
a function of PGA ingress~$\alpha$ and packet generation
probability~$p^{\text{packet}}$. Two configurations are compared:
requested links $s = 2$ in the state-of-the-art regime (left;
$\lambda = 2.22~\mathrm{Hz}$, $w = 275~\mathrm{ms}$, $z = 1.8$) and
$s = 10$ in the optimistic regime (right; $\lambda = 10~\mathrm{Hz}$,
$w = 2~\mathrm{s}$, $z = -2.2$). Each grid point is estimated from $10^4$ successful trials obtained by rejection sampling. White contours are spaced at intervals of $0.1$. These surfaces are computed once per physical configuration $(\lambda, w, s)$ and stored as lookup tables for use by the Markov chain session model. }
    \label{fig:lookup_tables} \vspace{-2mm}
    
\end{figure*}

Since no closed-form expression exists for the packet generation time within a PGA, the conditional violation probability $p^{\text{mv}}$ must be obtained by sampling. Fig.~\ref{fig:lookup_tables} presents $p^{\text{mv}}$ as a function of~$\alpha$ and~$p^{\text{packet}}$ for one representative configuration from each regime:~$d_2$ ($z = 1.8$) in the SOTA regime and~$d_{10}$ ($z = -2.2$) in the optimistic regime. The two surfaces exhibit qualitatively distinct contour structures. In the SOTA regime ($E \gg w$),~$p^{\text{mv}}$ contours disperse along the~$\alpha$ axis, allowing the ingress to encompass a substantial fraction of the PGA before minimum separation violations dominate. In the optimistic regime ($E \leq w$), the contours concentrate at low~$\alpha$ and~$p^{\text{mv}}$ rises steeply with even small ingress fractions. Increasing~$p^{\text{packet}}$ at fixed~$\alpha$ reduces~$p^{\text{mv}}$ in both regimes, since a longer PGA shifts generation times earlier.

These surfaces confirm that the $z$-score governs $p^{\text{mv}}$'s sensitivity to ingress: low-$z$ demands reach high minimum separation violation probabilities at smaller ingress and therefore require more conservative separation reductions. Each surface is computed once for a given physical configuration $(\lambda, w, s)$ and stored as a lookup table that the Markov chain session model queries during optimization. Because the table is precomputed, evaluating~$p^{\text{mv}}$ at any candidate $(\alpha, p^{\text{packet}})$ requires only interpolation rather than repeated simulation. In regimes where $\lambda w$ is large, a unit change in~$s$ produces only a small shift in $z$-score ($\Delta z = 1/\sqrt{\lambda w}$), suggesting that tables computed for nearby~$s$ values may provide reasonable approximations without recomputation. Quantifying the interpolation error across configurations is left to future work.

\subsection{Single Session Speedup}\label{subsec:session-speedup}
To evaluate session performance under reduced scheduled separation, demands~$d_s$ from each regime are configured with $N^{\text{inst}} = 10^3$ and $\epsilon_d^{\text{service}} = 0.5$, and the associated sessions are simulated over $10^3$ trials. We restrict the analysis to feasible demands with $z$-scores below~$5$, namely $d_2$,~$d_3$,~$d_4$ in the SOTA regime and~$d_{10}$,~$d_{20}$,~$d_{28}$,~$d_{34}$ in the optimistic regime. Of these,~$d_{10}$ yields no speedup at any feasible ingress, since its high $p^{\text{packet}} = 0.951$ causes the skip penalty to dominate the admissible range. The remaining six demands achieve positive speedup; for each, Table~\ref{tab:session_speedup} reports the optimal ingress~$\alpha^*$, the Markov minimum speedup bound~$\tilde{S}(\alpha^*, k^*)$, and the sample mean simulated speedup~$\overline{S}(\alpha^*)$ along with its 95\% confidence interval over $10^3$ trials. Across these configurations, $T_{\text{session}}(\alpha)$ is strictly decreasing in~$\alpha$ over the feasible region, so~$\alpha^*$ coincides with the largest feasible ingress. We remark that the stationary approximation~$\tilde{S}(\alpha^*)$ agrees with $\tilde{S}(\alpha^*, k^*)$ to within $0.05$ percentage points across all configurations, confirming that the chain converges well within the session length at $N^{\text{inst}} = 10^3$. The Markov bound $\tilde{S}(\alpha^*, k^*)$ underestimates the simulated speedup $\overline{S}(\alpha^*)$ by $0.1$ to $0.7$ percentage points, remaining tight enough to be predictive without repeated simulation.

Whether any speedup exists depends on $p^{\text{packet}}$. Since $p_r = p^{\text{packet}} \cdot p^{\text{mv}}(\alpha)$, a high $p^{\text{packet}}$ makes the skip penalty large at every~$\alpha$, and when $p^{\text{packet}}$ is near unity the penalty wipes out any gain across the admissible range, as observed for~$d_{10}$. When speedup is possible, its size is controlled by~$z$ through $R(\alpha^*)$. The geometric ceiling on~$R(\alpha^*)$ is ${t^{\text{minsep}} / (E + t^{\text{minsep}})}$, which shrinks with~$z$ as~$E$ grows relative to~$t^{\text{minsep}}$. However, the ceiling also becomes more reachable as~$z$ grows: longer PGAs spread generation times out, so~$p^{\text{mv}}$ rises more slowly with~$\alpha$ and the optimizer can push~$\alpha^*$ further before the skip penalty dominates. Near $z = 0$ ($d_{20}$), the ceiling is large at $42.2\%$,
but~$p^{\text{mv}}$ rises quickly and holds $R(\alpha^*)$ to $2.8\%$,
yielding a speedup of at least $0.8\%$. At the opposite extreme ($d_4$),
the ceiling shrinks to $7.6\%$ but is reached by $R(\alpha^*)$,
yielding a speedup of at least $6.1\%$.

Across all configurations, simulations confirm that at least~$N^{\text{inst}}$ packets are generated with probability $1 - \epsilon_d^{\text{service}}$ at~$\alpha^*$, though the analytical framework does not guarantee service delivery. Formalizing the relationship between~$\alpha^*$ and $\epsilon_d^{\text{service}}$ is left to future work.

\subsection{Multi-Session Speedup}
When multiple sessions share the network by round-robin scheduling, the PGAs of other sessions provide natural separation between consecutive same-session PGAs. For session $i$, this separation is ${g_i = T'_i + \sum_{j \neq i}(E_j + T'_j)}$. When ${g_i \geq t^{\text{minsep}}_i}$, no minimum separation violations occur and $T'_i$ can be fully eliminated. Otherwise, $\alpha^*_i$ still sets the scheduled separation through $T'_i = t^{\text{minsep}}_i - \alpha^*_i E_i$. However, only the residual $\alpha^*_{\text{eff},i} = (t^{\text{minsep}}_i - g_i)/E_i \leq \alpha^*_i$ contributes to violations through the violation zone $((1-\alpha^*_{\text{eff},i})E_i, E_i]$. We restrict our analysis to $n = 2$ co-scheduled sessions, which eliminates skip penalties for nearly all demand configurations demonstrating single-session speedup, since the no-violation condition reduces to $E \geq t^{\text{minsep}}$. The analysis extends to larger $n$ by replacing the single intervening PGA with the sum of the other $n-1$ intervening PGA durations.

We consider co-scheduling demand $d_{20}$ with demand $d_2$ under round-robin. The round-robin cycle has duration ${E_{20} + T'_{20} + E_2 + T'_2}$, with each $T'_i = t^{\text{minsep}}_i - \alpha^*_i E_i$. The cycle reduction $R_{\text{cyc}}$ is the fractional reduction of the optimized cycle relative to the baseline cycle in which every session uses its full minimum separation. We fix $\alpha^*_2 = 0.431$ at its geometric maximum, which eliminates $T'_2$; this is jointly optimal because $\alpha^*_{\text{eff},2} = 0$ leaves no skip penalty. Optimizing $d_{20}$ then yields $\alpha^*_{20} = 0.442$, giving an inter-PGA separation of $2.1$~s ($t^{\text{minsep}}_{20} = 2.25$~s) and a residual $\alpha^*_{\text{eff},20} = 0.047$.

\begin{table}[t!]
\centering
\caption{Single-Session performance at optimal ingress $\alpha^*$}
\label{tab:session_speedup}
\renewcommand{\arraystretch}{1.3}
\resizebox{\columnwidth}{!}{%
\begin{tabular}{l c c c c c c}
\toprule
Session & $z$ & $\alpha^*$ & $\tilde{S}(\alpha^*, k^*)$ [\%] & $\overline{S}(\alpha^*)$ [\%] & $R(\alpha^*)$ [\%] & $\Delta T_{\text{serv}}$ \\
\midrule
~$d_{2}$ & 1.8 & 0.063 & 2.0 & $2.7 \pm 0.1$ & 4.4 & 2.3~min \\
~$d_{3}$ & 3.1 & 0.095 & 5.2 & $5.8 \pm 0.2$ & 7.9 & 11.1~min \\
~$d_{4}$ & 4.3 & 0.082 & 6.1 & $6.2 \pm 0.2$ & 7.6 & 41.9~min \\
\midrule
~$d_{20}$ & 0.0 & 0.049 & 0.8 & $1.1 \pm 0.1$ & 2.8 & 3.2~min \\
~$d_{28}$ & 1.8 & 0.062 & 2.5 & $3.1 \pm 0.1$ & 4.9 & 18.1~min \\
~$d_{34}$ & 3.1 & 0.069 & 5.1 & $5.3 \pm 0.2$ & 6.5 & 2.8~hr \\
\bottomrule
\end{tabular}%
}
\end{table}

Two distinct mechanisms drive the multi-session speedup. First, the round-robin cycle replaces the
single-session release time $E + T'$, and the release time
reduction~$R(\alpha^*)$ is correspondingly replaced by the cycle reduction
$R_{\text{cyc}}$ defined above. Second, only the residual ingress contributes
to minimum separation violations, so the skip probability~$p_{r,i}$ is
evaluated at~$\alpha^*_{\text{eff},i}$ rather than at~$\alpha^*_i$.
Combining both effects, the per-session speedup admits the two
analytical forms developed for the single-session case: the Markov
lower bound on speedup $\tilde{S}_i(\alpha_i^*, k^*)$ analogous to
Eq.~(\ref{eq:speedup-ratio}), and its closed-form stationary
approximation $\tilde{S}_i(\alpha^*_{\text{eff},i})$ analogous to Eq.~(\ref{eq:speedup-decomp}). The Markov
bound is what Table~\ref{tab:multi_session_speedup} reports, and is
computed by evaluating the chain of
Section~\ref{sec:sess_perf_methods} with transition probability
$p_{r,i}(\alpha^*_{\text{eff},i})$ to obtain $k^*_i$, then substituting
into the multi-session analog of Eq.~(\ref{eq:speedup-ratio}). The
stationary approximation, obtained from the Markov bound by
substituting the stationary approximation of
Eq.~(\ref{eq:k-star-inflation}), gives a closed-form expression for
$\tilde{S}_i$ measured against the baseline cycle:
\begin{align}\label{eq:multi-speedup-decomp}
    \tilde{S}_i(\alpha^*_{\text{eff},i})
    &\;\approx\;
    1 - \bigl(1 + p_{r,i}(\alpha^*_{\text{eff},i})\bigr)
    \cdot \bigl(1 - R_{\text{cyc}}\bigr) \nonumber\\[4pt]
    &\;=\;
    \underbrace{R_{\text{cyc}}}_{\text{cycle reduction}}
    \;-\;
    \underbrace{p_{r,i}(\alpha^*_{\text{eff},i}) \cdot
    \bigl(1 - R_{\text{cyc}}\bigr)}_{\text{skip penalty}},
\end{align}
with analogous decomposition of scheduling quantities as
Eq.~(\ref{eq:speedup-decomp}). We remark that the approximation agrees with the
Markov bound to within $0.01$ percentage points.

The results of co-scheduling are reported in Table~\ref{tab:multi_session_speedup}, where each row corresponds to one session. The first three columns, $\alpha^*_i$, $\tilde{S}_i(\alpha_i^*, k^*)$, and $\overline{S}_i$, give the optimal ingress, \begin{table}[h]
\centering
\caption{Multi-session performance at optimal ingress $\alpha^*_i$}
\label{tab:multi_session_speedup}
\renewcommand{\arraystretch}{1.3}
\resizebox{\columnwidth}{!}{%
\begin{tabular}{l c c c c c c}
\toprule
Session & $z$ & $\alpha^*_i$ & $\tilde{S}_i(\alpha_i^*, k^*)$ [\%] & $\overline{S}_i$ [\%] & $R_i$ [\%] & $\Delta T_{\text{serv},i}$ \\
\midrule
~$d_{20}$ & 0.0 & 0.442 & 25.2 & $25.4 \pm 0.1$ & 25.6 & 29.2~min \\
~$d_{2}$  & 1.8 & 0.431 & 26.7 & $26.7 \pm 0.1$ & 30.1 & 15.8~min \\
\bottomrule
\end{tabular}%
}
\end{table}the Markov minimum speedup bound, and the sample mean simulated speedup with its 95\% confidence interval over $10^3$ trials. The remaining columns, $R_i$ and $\Delta T_{\text{serv},i}$, give the release time reduction session~$i$ would achieve if run alone at~$\alpha^*_i$, computed against the baseline $E_i + t^{\text{minsep}}_i$ as in Table~\ref{tab:session_speedup}. This framing allows direct comparison with each session's standalone performance in Section~\ref{subsec:session-speedup}.

Co-scheduling raises the minimum session speedup $\tilde{S}_i$ from $0.8\%$ to $25.2\%$ for~$d_{20}$, and from $2.0\%$ to $26.7\%$ for~$d_2$. The co-scheduled PGA fills most of the round-robin cycle on its own, so each session's residual ingress is small relative to the single-session case ($\alpha^*_{\text{eff},20} = 0.047$, $\alpha^*_{\text{eff},2} = 0$). The Markov bound $\tilde{S}_i$ underestimates the simulated speedup $\overline{S}_i$ by at most $0.2$ percentage points, narrower than in the single-session case due to the reduced skip penalty.

\section{Discussion}

The evaluation yields two practical takeaways for entanglement generation scheduling. Firstly, the single-session results reveal that scheduled separation
reduction yields no benefit when packets are produced with
near-certainty within a scheduled block. In this regime, minimum
separation violations become very likely, the resulting skips
accumulate, and execution time grows faster than the separation
reduction recovers. Schedulers can use this criterion to evaluate
whether scheduled separation reduction is indicated for a given
demand. Secondly, co-scheduled sessions provide natural separation between a
session's consumptions, absorbing time that would otherwise be
reserved for the scheduled separation. This enables larger scheduled separation
reductions.

Within both state-of-the-art and projected hardware regimes, the
co-scheduling effect can fully eliminate the scheduled separation:
when the duration of co-scheduled blocks exceeds a session's minimum
separation, no minimum separation violations are possible. The
scheduler then allocates the minimum scheduled time, and the session
correspondingly achieves the maximum reduction in network service
time and maximum execution speedup. Whether this condition holds in practice depends on how the scheduler allocates resources across demands. Network schedulers may partition scheduled resources into non-interacting regions, as Arqon does through its path partition; within such a region a session may have few or no co-scheduled partners, and the single-session optimization determines the largest feasible separation reduction. Conversely, when multiple sessions share the same resources, the duration of intervening co-scheduled blocks is more likely to exceed a session's minimum separation, enabling its scheduled separation to be fully eliminated.

Several directions remain regarding improvements to the analytical
methods and demonstrations of the scheduled separation optimization in
quantum network testbeds. Firstly, within the Arqon architecture,
accepted sessions receive service agreements committing the scheduler
to produce the required packets before expiry with at least the
specified service probability. Monte Carlo simulations confirm that the service commitment is met under
the optimized separation, but the framework does not formally guarantee
it. Establishing this guarantee analytically would convert the empirical
observation into a formal property of the scheduler. Secondly, the Markov lower bound on the expected session execution speedup is conservative. Because packet generation times have
variance, individual sessions on average complete earlier than the
bound predicts. Accounting for this variance would tighten the bound
while preserving its closed form.

For network schedulers that enter service agreements with sessions,
extending the framework to stricter operating regimes raises two
further questions. The evaluation uses a relaxed service probability
to isolate the effect of scheduled separation reduction. Higher service probabilities require the scheduler to allocate
additional blocks to compensate for minimum separation violations.
Jointly optimizing the separation and the block allocation in this
regime remains open.
This joint optimization could also incorporate the session's expiry, guaranteeing the service time meets the deadline.

Although developed within the Arqon architecture, our methods apply to
any architecture in which end nodes share hardware between entanglement
generation and LOCC. Alternative network schedulers or end
node behaviors may yield further gains in scheduling or session
execution efficiency. For example, adaptive scheduling could allocate additional blocks in regimes where minimum separation violations are likely, providing
benefit in cases where uniform reduction yields none. End node policies could also begin consumption mid-block as soon as LOCC operations complete, rather than skipping the block entirely.

\section*{Acknowledgment}
Claude Opus 4.6 co-developed simulations running Alg. 1.

\section{Appendices}
\subsection{Expected Number of Executed PGAs} \label{app:markov}
Let $M_k$ be a Markov chain defined on the state space $\mathbb{N}\times\{G,\overline{G}\}\cup\{(0, \overline{G})\}$, with transition probabilities of Eq.~\ref{eq:transition}, and let us write $M_k = (N_k, S_k)$. We want to calculate $\mathbb{E}[N_k]$. 


\begin{claim}[$k$-step transition matrix of $S_k$]
    Let $P_S$ be the transition matrix for the Markov chain $S_k$. Then the $k$-step transition matrix for $S_k$, $P_S^k$ is given by \begin{equation}
        P_S^k = \begin{bmatrix}
            1 - f_k(pr) & f_k(pr) \\ 1 - g_k(pr) & g_k(pr)
        \end{bmatrix}
    \end{equation} where \begin{align}
        f_k(x) &= x\sum_{i=0}^{k-1}(-x)^i & g_k(x) &= x\sum_{i=0}^{k-2}(-x)^i.
    \end{align}
    \label{claim: reduced chain k-step transition}
\end{claim}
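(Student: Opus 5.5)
Induction on $k$ is the natural route, with the convention that $P_S^1=P_S$. It helps to rewrite the claimed entries as rational functions first. For $x=p_r\neq -1$ the geometric sums give
\[
f_k(x)=\frac{x\bigl(1-(-x)^k\bigr)}{1+x}, \qquad g_k(x)=\frac{x\bigl(1-(-x)^{k-1}\bigr)}{1+x},
\]
so $g_k=f_{k-1}$, with the empty-sum conventions $f_0=0$ and $g_1=0$. Note that $p_r\in[0,1]$, so $1+p_r>0$. The claim is then equivalent to saying that the second row of $P_S^k$ equals the first row of $P_S^{k-1}$. This is plausible because from $G$ the chain moves to $\overline{G}$ deterministically in one step.

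\textbf{Base case.} For $k=1$ we have $f_1(x)=x$ and $g_1(x)=0$. The claimed matrix is therefore $\begin{bmatrix}1-p_r & p_r\\ 1 & 0\end{bmatrix}$, which is exactly $P_S$ from Eq.~\eqref{eq:transition}.

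\textbf{Inductive step.} Assume the form holds for $k$ and compute $P_S^{k+1}=P_S\,P_S^k$, i.e.\ left-multiply by $P_S$.
\begin{itemize}
\item Second row of $P_S$ is $(1,0)$. Hence the second row of $P_S^{k+1}$ is the first row of $P_S^k$, namely $(1-f_k,\,f_k)$. Since $g_{k+1}=f_k$ by definition, this is exactly what the claim requires.
\item First row. The $(1,2)$ entry is $(1-p_r)f_k+p_r g_k$. We need to verify that this equals $f_{k+1}$. Using $g_k=f_{k-1}$, this reduces to the recurrence
\[
f_{k+1}=f_k-p_r\,(f_k-f_{k-1}).
\]
From the series, $f_k-f_{k-1}=p_r(-p_r)^{k-1}$. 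Therefore $f_k-p_r(f_k-f_{k-1})=f_k+p_r(-p_r)^k=f_{k+1}$.
\item Row sums. Every row of a stochastic matrix sums to one, and products of stochastic matrices are stochastic. So the first-column entries are automatically $1-f_{k+1}$ and $1-g_{k+1}$.
\end{itemize}

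Strictly speaking the recurrence uses $f_{k-1}$. It is therefore cleaner to phrase the induction as carrying both rows simultaneously, which the statement already does through $g_k$. At $k=1$ this needs $f_0=0$, consistent with the empty sum.

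\textbf{Main obstacle.} There is no real difficulty; the step needing care is bookkeeping. One must check the index shift $g_k=f_{k-1}$ and the empty-sum conventions at small $k$. One must also fix the multiplication order so that the deterministic $G\to\overline{G}$ row is what shifts the indices; choosing $P_S^k P_S$ instead would require a different, though also valid, telescoping check.

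As a sanity check, the closed forms show that both rows converge to $\bigl(1/(1+p_r),\,p_r/(1+p_r)\bigr)$. This matches the stationary distribution in Eq.~\eqref{eq:stationary}.
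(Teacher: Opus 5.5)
Your induction is correct and is essentially the paper's proof. The only difference is the multiplication order. The paper right-multiplies, $P_S^{k+1}=P_S^kP_S$, so both rows obey the one-term recurrence $h_{k+1}(x)=x\bigl(1-h_k(x)\bigr)$ for $h\in\{f,g\}$, and the first column comes out of the product directly. Your left-multiplication instead gets the second row from the shift $g_{k+1}=f_k$, and the first row from the two-term identity $(1-x)f_k+xf_{k-1}=f_{k+1}$ together with row-stochasticity. One small caveat: your closing sanity check on convergence to the stationary distribution needs $p_r<1$, since at $p_r=1$ the powers $P_S^k$ oscillate.
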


\begin{proof}
    We can write $P_S = P(pr)$, where \begin{equation}
        P(x) = \begin{bmatrix}
            1-x & x \\ 1 & 0
        \end{bmatrix}.
    \end{equation}
    It is then equivalent to show that $$P(x)^k = \begin{bmatrix}
        1 - f_k(x) & f_k(x) \\ 1 - g_k(x) & g_k(x)
    \end{bmatrix}.$$
    
    We proceed by induction.
    For $k=1$, we have $f_1(x) = x$ and $g_1(x) = 0$, so $P(x)^1$ is of the desired form. 

    Assume the results holds for $n = k$. Then \begin{align*}
    P(x)^{k+1} &= P(x)^k P(x)
    = \begin{bmatrix}
        1 - f_{k}(x) & f_k(x) \\ 1 - g_k(x) & g_k(x)
    \end{bmatrix}
    \begin{bmatrix}
        1 - x & x \\ 1 & 0
    \end{bmatrix} \\[6pt]
    &= \begin{bmatrix}
        1 - x(1-f_k(x)) & x(1-f_k(x)) \\
        1 - x(1-g_k(x)) & x(1-g_k(x))
    \end{bmatrix}
\end{align*}
  By direct calculation one can show $x(1-f_k(x)) = f_{k+1}(x)$, $x(1-g_k(x)) = g_{k+1}(x)$ and the result follows. 
\end{proof}

\begin{claim}
    $\mathbb{E}[N_k] = \mathbb{E}[N_{k-1}] + 1 - f_{k-1}(pr)$
    \label{claim: recurrence relation}
\end{claim}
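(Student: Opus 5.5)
The plan is to condition the increment $N_k - N_{k-1}$ on the label $S_k$ of step $k$, and then compute $\mathbb{P}[S_k = G]$ from Claim~\ref{claim: reduced chain k-step transition}.

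\textbf{Step 1: write the increment as an indicator.} By construction of $M_k$, step $k$ adds one executed PGA exactly when it is not skipped. Hence
\begin{equation*}
N_k = N_{k-1} + \mathbf{1}[S_k = \overline{G}].
\end{equation*}
Taking expectations and using linearity gives
\begin{equation*}
\mathbb{E}[N_k] = \mathbb{E}[N_{k-1}] + \mathbb{P}[S_k = \overline{G}] = \mathbb{E}[N_{k-1}] + 1 - \mathbb{P}[S_k = G].
\end{equation*}
Equivalently, one can condition on $S_{k-1}$ using the total law of expectation, as the main text suggests.
\begin{itemize}
\item If $S_{k-1} = G$, step $k$ is deterministically executed.
\item If $S_{k-1} = \overline{G}$, step $k$ is executed with probability $1 - p_r$.
\end{itemize}
Both routes lead to the same marginal probability.

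\textbf{Step 2: identify the marginal law of the label.} The chain starts at $(0, \overline{G})$, so $S_0 = \overline{G}$. By Claim~\ref{claim: reduced chain k-step transition}, the $(\overline{G}, G)$ entry of $P_S^{m}$ equals $f_m(p_r)$. Therefore $\mathbb{P}[S_m = G] = f_m(p_r)$, the probability of being in $G$ after $m$ steps.

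\textbf{Step 3: fix the indexing.} This is the main obstacle. The claim has $f_{k-1}$ rather than $f_k$, so the labeling convention must be matched precisely.
\begin{itemize}
\item The paper's recurrence says the outcome counted at step $k$ is determined by the state reached after $k-1$ transitions.
\item Concretely, the execution at scheduled step $k$ is recorded by the label $S_{k-1}$: step $1$ is executed from the initial state $S_0 = \overline{G}$.
\item With this convention, $N_k - N_{k-1} = \mathbf{1}[S_{k-1} = \overline{G}]$.
\item Substituting $\mathbb{P}[S_{k-1} = G] = f_{k-1}(p_r)$ from Step 2 yields the stated recurrence, with $f_0 = 0$ (empty sum) as the base case $\mathbb{E}[N_1] = 1$.
\end{itemize}
I would state this convention explicitly and check it against $k = 1, 2$. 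For $k = 1$, step $1$ is always executed, giving $\mathbb{E}[N_1] = 1$. For $k = 2$, step $2$ is executed with probability $1 - p_r$, giving $\mathbb{E}[N_2] = 2 - p_r$. This matches the closed form $\sum_{i=0}^{k-1}(k-i)(-p_r)^i$.
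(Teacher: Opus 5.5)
Your Step 3 argument is correct and is essentially the paper's proof. The paper likewise conditions on $S_{k-1}$, uses that $N_k - N_{k-1}$ equals $1$ when $S_{k-1}=\overline{G}$ and $0$ when $S_{k-1}=G$, and substitutes $\mathbb{P}[S_{k-1}=G]=f_{k-1}(p_r)$ from the $k$-step transition claim. Your indicator-plus-linearity form is slightly cleaner, since it avoids the paper's harmless replacement of the joint law of $(N_{k-1},S_{k-1})$ by a product of marginals.

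Do drop the Step 1 identity $N_k = N_{k-1}+\mathbf{1}[S_k=\overline{G}]$: combined with $S_0=\overline{G}$, it gives $f_k$ rather than $f_{k-1}$.
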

\begin{proof}\footnotesize
    \begin{align}
    \mathbb{E}[N_k] &= \sum_{n} n\,\mathbb{P}[N_k = n] \notag\\
    &= \sum_{n,m} n\,\mathbb{P}[N_k = n \mid N_{k-1} = m]\,\mathbb{P}[N_{k-1} = m] \notag\\
    &= \sum_{n,m} n\Big(
        \mathbb{P}[N_k = n \mid N_{k-1} = m,\, S_{k-1} = G]\,\mathbb{P}[S_{k-1} = G] \notag\\
    &\qquad\quad + \mathbb{P}[N_k = n \mid N_{k-1} = m,\, S_{k-1} = \bar{G}]\,\mathbb{P}[S_{k-1} = \bar{G}]
    \Big) \notag\\
    &\qquad\quad \times \mathbb{P}[N_{k-1} = m]. \label{eq: E[N_k] pt1}
\end{align}\normalsize

    Now, we have that \small \begin{align*} 
        \mathbb{P}[N_k =n | N_{k-1} = m, S_{k-1}=\overline{G}] &= \left\lbrace\begin{matrix}
            1 & n = m + 1\\ 0 & \text{else}
        \end{matrix}\right. \\
        \mathbb{P}[N_k =n | N_{k-1} = m, S_{k-1}=G] &= \left\lbrace\begin{matrix}
            1 & n = m \\ 0 & \text{else}
        \end{matrix}\right.
    \end{align*} and so from \eqref{eq: E[N_k] pt1}, we get \begin{align*}
    \mathbb{E}[N_k] &= \sum_{m} \Big(
        m\,\mathbb{P}[S_{k-1} = G]
        + (m+1)\,\mathbb{P}[S_{k-1} = \bar{G}]
    \Big) \\
    &\qquad\quad \times \mathbb{P}[N_{k-1} = m].
\end{align*}\normalsize
    From the earlier claim, we have that $\mathbb{P}[S_{k-1} = {G}] = 1 - \mathbb{P}[S_{k-1} = \overline{G}] = f_{k-1}(pr)$. Therefore, we have \footnotesize \begin{align*}
        \mathbb{E}[N_k] &= \sum_m\big(mf_{k-1}(pr) + (1+m)(1-f_{k-1}(pr))\big)\mathbb{P}[N_{k-1}=m]\\
        &= \sum_m m\mathbb{P}[N_{k-1}=m] + (1 - f_{k-1}(pr))\sum_m\mathbb{P}[N_{k-1} = m] \\
        &= \mathbb{E}[N_{k-1}] + 1 - f_{k-1}(pr)
    \end{align*}\normalsize
\end{proof}

\begin{claim}
    $\mathbb{E}[N_k] = \sum_{i=0}^{k-1}(k-i)(-pr)^i$
\end{claim}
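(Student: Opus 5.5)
We will prove the closed form by induction on $k$, using the recurrence of Claim~\ref{claim: recurrence relation} together with the explicit form of $f_k$ from Claim~\ref{claim: reduced chain k-step transition}. The first task is to fix the initial condition. The chain starts in $(0,\overline{G})$, so $N_0 = 0$. With the convention $f_0(x) = 0$ (an empty sum), the recurrence gives $\mathbb{E}[N_1] = 0 + 1 - f_0(pr) = 1$. This matches the claimed formula at $k=1$, namely $\sum_{i=0}^{0}(1-i)(-pr)^i = 1$, and it also agrees with $k=0$, where the sum is empty and equals $0$. So the base case can be taken at either $k=0$ or $k=1$.

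For the inductive step, assume $\mathbb{E}[N_{k-1}] = \sum_{i=0}^{k-2}(k-1-i)(-pr)^i$. We rewrite $1 - f_{k-1}(pr)$ as a single alternating sum:
\begin{equation*}
1 - f_{k-1}(x) = 1 - x\sum_{i=0}^{k-2}(-x)^i = 1 + \sum_{i=1}^{k-1}(-x)^i = \sum_{i=0}^{k-1}(-x)^i .
\end{equation*}
Adding this to the inductive hypothesis and grouping terms by powers of $(-x)$, each index $0\le i\le k-2$ receives coefficient $(k-1-i)+1 = k-i$. The top index $i=k-1$ receives coefficient $1 = k-(k-1)$. Hence $\mathbb{E}[N_k] = \sum_{i=0}^{k-1}(k-i)(-pr)^i$, which is the claim.

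An equivalent route avoids the induction. Telescope the recurrence to get $\mathbb{E}[N_k] = \sum_{j=0}^{k-1}\bigl(1 - f_j(pr)\bigr) = \sum_{j=0}^{k-1}\sum_{i=0}^{j}(-pr)^i$. Then swap the order of summation: the power $(-pr)^i$ appears once for each $j$ with $i\le j\le k-1$, that is, $k-i$ times.

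The only step needing care is the boundary convention. Claim~\ref{claim: recurrence relation} at $k=1$ uses $f_0$, and $f_0$ must equal $0$, which is consistent with $P_S^0 = I$ and the chain starting in $\overline{G}$. Claim~\ref{claim: reduced chain k-step transition} was stated only for $k\ge 1$, so we would note explicitly that the empty-sum convention extends it to $k=0$. After that, the argument is a purely algebraic reindexing.
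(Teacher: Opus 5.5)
Your proof is correct and takes the same route as the paper: induction on $k$ through the recurrence of Claim~\ref{claim: recurrence relation}. Your identity $1-f_{k-1}(x)=\sum_{i=0}^{k-1}(-x)^i$ and the regrouping supply the ``direct calculation'' the paper omits, and you correctly use $f_{k-1}$ where the paper's inductive step writes $f_{k+1}$, an index slip; only the first row of Claim~\ref{claim: reduced chain k-step transition} extends to $k=0$ by empty sums (it would give $g_0=0$ rather than $1$), but $f_0=0$ is all you use and you justify it directly from $P_S^0=I$.
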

\begin{proof}
    We proceed by induction. By inspection of the corresponding Markov chain in Figure~\ref{fig:main chain}, we can see that given we start in state $(0, \overline{G})$, we must have $\mathbb{E}[N_1] = 1$. From the desired form, we have $\mathbb{E}[N_{1}] = \sum_{i=0}^0(k-i)(-pr)^i = 1$. So the base case holds. 

    Let us now assume that the assumption holds for $n=k$. Then, using the recurrence relation from Claim~\ref{claim: recurrence relation}, we have that
    $\mathbb{E}[N_{k+1}] = \mathbb{E}[N_{k}] + 1 - f_{k+1}$ and the result follows by direct calculation.
\end{proof}

\bibliography{ref}

@article{beauchamp2025modular,
  title={A modular quantum network architecture for integrating network scheduling with local program execution},
  author={Beauchamp, Thomas R and Jirovsk{\'a}, Hana and Gauthier, Scarlett and Wehner, Stephanie},
  journal={arXiv preprint arXiv:2503.12582},
  year={2025}
}

@misc{arqon_arxiv,
      title={Arqon: A suite of control applications enabling a reliable quantum network}, 
      author={Scarlett Gauthier and Thomas R. Beauchamp and Stephanie Wehner},
      year={2026},
      eprint={2604.08692},
      archivePrefix={arXiv},
      primaryClass={quant-ph},
      url={https://arxiv.org/abs/2604.08692}, 
}

@misc{minsep_relaxation_2026,
  author       = {Jake Smith},
  title        = {{Data-set and Software Tools: Tools for Reducing Service Time in Near-Term Quantum Networks}},
  year         = {2026},
  howpublished = {\url{https://gitlab.tudelft.nl/wehner-research/minsep_relaxation}},
}

@inproceedings{shi2020concurrent,
  title={Concurrent entanglement routing for quantum networks: Model and designs},
  author={Shi, Shouqian and Qian, Chen},
  booktitle={Proceedings of the Annual conference of the ACM Special Interest Group on Data Communication on the applications, technologies, architectures, and protocols for computer communication},
  pages={62--75},
  year={2020}
}

@article{canteri2024photon,
  title={A photon-interfaced ten qubit quantum network node},
  author={Canteri, M and Koong, ZX and Bate, J and Winkler, A and Krutyanskiy, V and Lanyon, BP},
  journal={arXiv preprint arXiv:2406.09480},
  year={2024}
}

@article{vasantam2021stability,
  title={Stability analysis of a quantum network with max-weight scheduling},
  author={Vasantam, Thirupathaiah and Towsley, Don},
  journal={arXiv preprint arXiv:2106.00831},
  year={2021}
}

@article{krutyanskiy2023telecom,
  title={Telecom-wavelength quantum repeater node based on a trapped-ion processor},
  author={Krutyanskiy, Victor and Canteri, Marco and Meraner, Martin and Bate, James and Krcmarsky, Vojtech and Schupp, Josef and Sangouard, Nicolas and Lanyon, Ben P},
  journal={Physical Review Letters},
  volume={130},
  number={21},
  pages={213601},
  year={2023},
  publisher={APS}
}

@inproceedings{pouryousef2023quantum,
  title={A quantum overlay network for efficient entanglement distribution},
  author={Pouryousef, Shahrooz and Panigrahy, Nitish K and Towsley, Don},
  booktitle={IEEE INFOCOM 2023-IEEE Conference on Computer Communications},
  pages={1--10},
  year={2023},
  organization={IEEE}
}

@article{dai2020optimal,
  title={Optimal remote entanglement distribution},
  author={Dai, Wenhan and Peng, Tianyi and Win, Moe Z},
  journal={IEEE Journal on Selected Areas in Communications},
  volume={38},
  number={3},
  pages={540--556},
  year={2020},
  publisher={IEEE}
}

@article{cicconetti2021request,
  title={Request scheduling in quantum networks},
  author={Cicconetti, Claudio and Conti, Marco and Passarella, Andrea},
  journal={IEEE Transactions on Quantum Engineering},
  volume={2},
  pages={2--17},
  year={2021},
  publisher={IEEE}
}

@article{li2021effective,
  title={Effective routing design for remote entanglement generation on quantum networks},
  author={Li, Changhao and Li, Tianyi and Liu, Yi-Xiang and Cappellaro, Paola},
  journal={npj Quantum Information},
  volume={7},
  number={1},
  pages={10},
  year={2021},
  publisher={Nature Publishing Group UK London}
}

@article{pant2019routing,
  title={Routing entanglement in the quantum internet},
  author={Pant, Mihir and Krovi, Hari and Towsley, Don and Tassiulas, Leandros and Jiang, Liang and Basu, Prithwish and Englund, Dirk and Guha, Saikat},
  journal={npj Quantum Information},
  volume={5},
  number={1},
  pages={25},
  year={2019},
  publisher={Nature Publishing Group UK London}
}

@inproceedings{zhao2021redundant,
  title={Redundant entanglement provisioning and selection for throughput maximization in quantum networks},
  author={Zhao, Yangming and Qiao, Chunming},
  booktitle={IEEE INFOCOM 2021-IEEE Conference on Computer Communications},
  pages={1--10},
  year={2021},
  organization={IEEE}
}

@article{van2025qoala,
  title={Qoala: an application execution environment for quantum internet nodes},
  author={van der Vecht, Bart and Y{\"u}cel, Atak Talay and Jirovsk{\'a}, Hana and Wehner, Stephanie},
  journal={arXiv preprint arXiv:2502.17296},
  year={2025}
}

@article{pompili2022experimental,
  title={Experimental demonstration of entanglement delivery using a quantum network stack},
  author={Pompili, Matteo and Delle Donne, Carlo and te Raa, Ingmar and van der Vecht, Bart and Skrzypczyk, Matthew and Ferreira, Guilherme and de Kluijver, Lisa and Stolk, Arian J and Hermans, Sophie LN and Pawe{\l}czak, Przemys{\l}aw and others},
  journal={npj Quantum Information},
  volume={8},
  number={1},
  pages={121},
  year={2022},
  publisher={Nature Publishing Group UK London}
}

@article{pirker2019quantum,
  title={A quantum network stack and protocols for reliable entanglement-based networks},
  author={Pirker, Alexander and D{\"u}r, Wolfgang},
  journal={New Journal of Physics},
  volume={21},
  number={3},
  pages={033003},
  year={2019},
  publisher={IOP Publishing}
}

@inproceedings{gu2023esdi,
  title={Esdi: Entanglement scheduling and distribution in the quantum internet},
  author={Gu, Huayue and Yu, Ruozhou and Li, Zhouyu and Wang, Xiaojian and Zhou, Fangtong},
  booktitle={2023 32nd International Conference on Computer Communications and Networks (ICCCN)},
  pages={1--10},
  year={2023},
  organization={IEEE}
}

@article{stolk2024metropolitan,
  title={Metropolitan-scale heralded entanglement of solid-state qubits},
  author={Stolk, Arian J and van der Enden, Kian L and Slater, Marie-Christine and te Raa-Derckx, Ingmar and Botma, Pieter and Van Rantwijk, Joris and Biemond, JJ Benjamin and Hagen, Ronald AJ and Herfst, Rodolf W and Koek, Wouter D and others},
  journal={Science advances},
  volume={10},
  number={44},
  pages={eadp6442},
  year={2024},
  publisher={American Association for the Advancement of Science}
}

@inproceedings{broadbent2009universal,
  title={Universal blind quantum computation},
  author={Broadbent, Anne and Fitzsimons, Joseph and Kashefi, Elham},
  booktitle={2009 50th annual IEEE symposium on foundations of computer science},
  pages={517--526},
  year={2009},
  organization={IEEE}
}

@article{bova2021commercial,
  title={Commercial applications of quantum computing},
  author={Bova, Francesco and Goldfarb, Avi and Melko, Roger G},
  journal={EPJ quantum technology},
  volume={8},
  number={1},
  pages={2},
  year={2021},
  publisher={Springer}
}

@article{wehner2018quantum,
  title={Quantum internet: A vision for the road ahead},
  author={Wehner, Stephanie and Elkouss, David and Hanson, Ronald},
  journal={Science},
  volume={362},
  number={6412},
  pages={eaam9288},
  year={2018},
  publisher={American Association for the Advancement of Science}
}

@article{zaiser2016enhancing,
  title={Enhancing quantum sensing sensitivity by a quantum memory},
  author={Zaiser, Sebastian and Rendler, Torsten and Jakobi, Ingmar and Wolf, Thomas and Lee, Sang-Yun and Wagner, Samuel and Bergholm, Ville and Schulte-Herbr{\"u}ggen, Thomas and Neumann, Philipp and Wrachtrup, J{\"o}rg},
  journal={Nature communications},
  volume={7},
  number={1},
  pages={12279},
  year={2016},
  publisher={Nature Publishing Group UK London}
}

@article{giovannetti2001quantum,
  title={Quantum-enhanced positioning and clock synchronization},
  author={Giovannetti, Vittorio and Lloyd, Seth and Maccone, Lorenzo},
  journal={Nature},
  volume={412},
  number={6845},
  pages={417--419},
  year={2001},
  publisher={Nature Publishing Group UK London}
}

@article{delle2025operating,
  title={An operating system for executing applications on quantum network nodes},
  author={Delle Donne, Carlo and Iuliano, Mariagrazia and van der Vecht, Bart and Ferreira, Guilherme Maciel and Jirovsk{\'a}, Hana and van der Steenhoven, Thom JW and Dahlberg, Axel and Skrzypczyk, Matthew and Fioretto, Dario and Teller, Markus and others},
  journal={Nature},
  volume={639},
  number={8054},
  pages={321--328},
  year={2025},
  publisher={Nature Publishing Group UK London}
}

@misc{arqon_sim,
  doi = {10.4121/97e0f340-6cc2-4908-9670-8308fed3e2ec.v1},
  url = {},
  author = {Gauthier, Scarlett and Beauchamp, Thomas R. and Wehner, Stephanie},
  title = {Code and Data for Arqon: A Suite of Control Applications Enabling Reliable Quantum Networks},
  publisher = {4TU.ResearchData},
  year = {2026},
  copyright = {CC BY-NC 4.0},
}

@article{hucul2015modular,
  title={Modular entanglement of atomic qubits using photons and phonons},
  author={Hucul, David and Inlek, Ismail V and Vittorini, Grahame and Crocker, Clayton and Debnath, Shantanu and Clark, Susan M and Monroe, Christopher},
  journal={Nature Physics},
  volume={11},
  number={1},
  pages={37--42},
  year={2015},
  publisher={Nature Publishing Group UK London}
}

@article{liu2026long,
  title={Long-lived remote ion-ion entanglement for scalable quantum repeaters},
  author={Liu, Wen-Zhao and Zhou, Ya-Bin and Chen, Jiu-Peng and Wang, Bin and Teng, Ao and Han, Xiao-Wen and Liu, Guang-Cheng and Zhang, Zhi-Jiong and Yang, Yi and Liu, Feng-Guang and others},
  journal={Nature},
  pages={1--3},
  year={2026},
  publisher={Nature Publishing Group UK London}
}

@INPROCEEDINGS {two_level_control,
author = { Yu, Se-young and Perego, Elia and Phillips, Justin and Cheah, You-Wei and Umesh, Prathwiraj and Gao, Guangqi and Liu, Jiarui and Kissel, Ezra and Bregar, Michael and Sun, Ke and Wu, Qiming and Valivarthi, Raju and Saglamyurek, Erhan and Wu, Wenji and Spiropulu, Maria and Haffner, Hartmut and Monga, Inder },
booktitle = { 2025 IEEE International Conference on Quantum Computing and Engineering (QCE) },
title = {{ A Two-Level Control Framework for Quantum Networks }},
year = {2025},
volume = {},
ISSN = {},
pages = {1302-1311},
doi = {10.1109/QCE65121.2025.00145},
url = {https://doi.ieeecomputersociety.org/10.1109/QCE65121.2025.00145},
publisher = {IEEE Computer Society},
address = {Los Alamitos, CA, USA},
month =sep}

@article{islam2025experimental,
  title={Experimental Demonstration of Software-Orchestrated Quantum Network Applications over a Campus-Scale Testbed},
  author={Islam, Md Shariful and Chung, Joaquin and Eastman, Ely Marcus and Hayek, Robert J and Kumar, Prem and Kettimuthu, Rajkumar},
  journal={arXiv preprint arXiv:2511.01247},
  year={2025}
}

@inproceedings{harrison2010introduction,
  title={Introduction to {M}onte {C}arlo simulation},
  author={R. L. Harrison},
  booktitle={AIP Conf. Proc.},
  volume={1204},
  number={1},
  pages={17ß},
  year={2010},
  organization={Amer. Inst. Phys.}
}

@article{kapoor2025public,
  title={Public quantum network: The first node},
  author={Kapoor, K and Hoseini, S and Choi, J and Nussbaum, BE and Zhang, Y and Shetty, K and Skaar, C and Ward, M and Wilson, L and Shinbrough, K and others},
  journal={Applied Physics Letters},
  volume={126},
  number={5},
  year={2025},
  publisher={AIP Publishing}
}

\end{document}